\let\csname equation*\endcsname\relax 
\let\csname endequation*\endcsname\relax 
\newcounter{counter}
\newtheorem{theorem}[counter]{Theorem}
\newtheorem{lemma}[counter]{Lemma}
\newtheorem{cor}[counter]{Corollary}
\newtheorem*{prob}{Problem Statement}
\newtheorem{example}{Example}
\newtheorem*{remark}{Remark}
\newcommand*{\defeq}{\mathrel{\vcenter{\baselineskip0.5ex \lineskiplimit0pt
                     \hbox{\scriptsize.}\hbox{\scriptsize.}}}%
                     =}
\renewcommand*\env@matrix[1][c]{\hskip -\arraycolsep
  \let\@ifnextchar\new@ifnextchar
  \array{*\c@MaxMatrixCols #1}}
\newcommand{\BigO}[1]{\ensuremath{\operatorname{O}\left(#1\right)}}
\begin{document}

\title[Stability of the Trotter-Suzuki decomposition]{
 Stability of the Trotter-Suzuki decomposition}
\author{Ish Dhand and Barry C Sanders}
\address{Institute for Quantum Science and Technology, University of Calgary, Alberta T2N 1N4, Canada}
\address{Hefei National Laboratory for Physical Sciences at the Microscale,
	University of Science and Technology of China, China}
\eads{\mailto{idhand@ucalgary.ca} and \mailto{sandersb@ucalgary.ca}}

\begin{abstract}
The Trotter-Suzuki decomposition is an important tool for the simulation and control of physical systems. We provide evidence for the stability of the Trotter-Suzuki decomposition. We model the error in the decomposition and determine sufficiency conditions that guarantee the stability of this decomposition under this model.  We relate these sufficiency conditions to precision limitations of computing and control in both classical and quantum cases. Furthermore we show that bounded-error Trotter-Suzuki decomposition can be achieved by a suitable choice of machine precision.
\end{abstract}
%Uncomment for PACS numbers title message
\pacs{02.60.Cb, 02.70.-c, 03.67.Ac}
% Keywords required only for MST, PB, PMB, PM, JOA, JOB? 
%\vspace{2pc}â„
%\noindent{\it Keywords}: Article preparation, IOP journals
% Uncomment for Submitted to journal title message
%\submitto{\JPA}
% Comment out if separate title page not required
%\maketitle

%%%%%%%%%%%%%%%%%%%%%%%%%%%%%%
%%%%%%%%%%%%%%%%%%%%%%%%%%%%%%

\section{Introduction} 
\label{sec:Introduction}
Solving partial linear differential equations is essential in science.
Many systems obey equations of the form 
\begin{equation}\label{eq:v}
	\partial_\lambda {v}(\lambda) = A{v}(\lambda)
\end{equation}
with~${v}$ an element of an $\ell$-dimensional Banach space $X$, $A$ a bounded operator whose domain is in $X$ and $\lambda \in [0,\infty)$.The vector~$v$ has norm~$\left\|v\right\|$ determined uniquely by the norm property of the Banach space.  For example, time evolution under the Schr\"odinger equation has the form~\eqref{eq:v} if $A$ is skew-adjoint and $X$ is a complex Hilbert space. If $A$ is self-adjoint, the equation describes the cooling (or heating) of a physical system and is useful in finding the ground state of quantum systems.

Given constant $A\in X$, a solution to Eq.~\eqref{eq:v} is the continuous function ${v}: [0,\infty) \to X$, differentiable on $(0,\infty)$ that satisfies the equation. If the solutions of Eq.~\eqref{eq:v} are unique, then we can define the operator exponential $U(\lambda)$ as
\begin{equation}
	{v}(\lambda) = U(\lambda){v}(0) 
\end{equation}
for initial condition ${v}(0)$. Here $U(\lambda)$ an element of $L(X)$, which is the set of isomorphisms on $X$. Geometrically, $U(\lambda)$ is the flow generated by constant $A$
\begin{equation}
	A = \lim_{\lambda\to 0} {\lambda}^{-1} (U(\lambda) - \mathbbm{1}).
\end{equation}
In general, an expression for either $v(\lambda)$ or $U(\lambda)$ obtained by solving Eq.~\eqref{eq:v} is elusive. In such cases we have to approximate $v(\lambda)$ with some ${v}^\prime(\lambda)$
such that the distance between the exact and the approximate solutions
is smaller than a specified bound. 
Equivalently, we can approximate $U(\lambda)$ with ${U}^\prime(\lambda)$,
which is close to flow $U(\lambda)$ in terms of a suitable distance measure defined on the operator space. 
Please note that we use the prime in $v^\prime$ and $U^\prime$ not to represent derivative but to represent objects close to $v$ and $U$ respectively.

The distance between $v(\lambda)$ and ${v}^\prime(\lambda)$ is defined as the Banach-space norm of their difference $\left\|v(\lambda)-{v}^\prime(\lambda)\right\| $. Similarly, distance between two operators is defined in terms of their operator norms as $\left\|U(\lambda)-{U}^\prime(\lambda)\right\| $ for the operator norm induced by the Banach-space norm defined as
\begin{equation}\label{eq:OperatorNorm}
	\left\|W\right\|\defeq \max\left\{\left\| W v\right\| : \forall v \in X,  \|v\| \le 1 \right\}.
\end{equation}
For instance, if the distance between two states in a Banach space is given by the $2-$norm $\left\|v(\lambda)-{v}^\prime(\lambda)\right\|$, then the corresponding natural norm~\cite{Horn1985} for operators on the same space is the spectral norm, which we denote $\left\|U(\lambda) - {U}^\prime(\lambda)\right\|$. The $2$-norm for the vectors is $\left\|v\right\| = \sqrt{\sum_j v_j^2}$ for ${v_j}$ the elements of the column matrix representing the vector. The spectral norm is calculated as $\left\|U\right\| = \sqrt{\lambda_\text{max} \left(U^{\dagger} U\right)}$ where $\lambda_\text{max}: L(X)\to \mathbbm{R} $ is the magnitude of the largest eigenvalue of the respective operator.

One approach to approximating the flow $U(\lambda)$ uses product formul\ae, which has the advantage that the approximate decomposition preserves the geometric properties of the original flow. For example, a flow that is a unitary operator generated by skew-Hermitian~$A$ 
is decomposed into a product of unitary operators.
Product formul\ae~approximate the flow 
\begin{equation}
U(\lambda) =  \text{e}^{\sum_{j=1}^m A_j \lambda} 
\end{equation}
generated by $A=\sum_{j=1}^m A_j \lambda$ with
\begin{equation}
 	{U}^\prime(\lambda) =\prod_{p=1}^N U_p,
\end{equation}
which is a product of~$N$ ordinary operator exponentials 
\begin{equation}\label{eq:UpDefinition}
U_p \defeq {\text{e}^{A_{j_p}\lambda_p}}
\end{equation}
of $A_j$. Here $\{\lambda_p\}$ is a sequence of real numbers such that $\sum_p \lambda_p = \lambda$.

The recursive Trotter-Suzuki decomposition (TSD) is an especially important product formula, which plays a key role in classical~\cite{Bandrauk1991,Bandrauk1993,Vidal2004,Couairon2007,Jordan2007,Vidal2007,Verstraete2008,Banuls2009,Zhao2010} and quantum algorithms involving Hamiltonian simulation~\cite{abrams1997,Lloyd1996a,Harrow2009c,whitfield2011,Jordan2012a} and in quantum control~\cite{Schirmer1999}. The TSD is important as it minimizes the computational cost (quantified by number of ordinary operator exponentials in decomposition) of approximating ordered operator exponentials using product formulae~\cite{Berry2006,papageorgiou2012efficiency}. We focus our discussion on the TSD but our analysis can be used to study the stability properties of product formul\ae~such as the Baker-Campbell-Hausdorff formula~\cite{Blanes2009,Gilmore1974,Wilcox1967} and the Magnus expansion~\cite{Magnus1954,Richtmyer1965}. 

Recent quantum algorithms for Hamiltonian simulation~\cite{Berry2013,Berry2014} have a computational  cost that is exponentially smaller as function of desired error tolerance than algorithms based on product formul\ae. However, these algorithms do not perform ordered operator decomposition are not intended for use in quantum control or classical simulation. Hence, we do not consider these algorithms here.

The advantage of TSD is provable mathematically. However, real-world implementation of TSD as a calculation on a finite-precision Turing-equivalent computer (classical computer) or as an imperfect gate sequence on a quantum computer
or as a sequence of experimental control operations no longer guarantees this advantage. Hence, we have the following problem statement.\begin{flushright}

\begin{prob}
The stability of the TSD under finite precision either for computing or control is not proven.
Instability of the TSD could prevent achieving minimum time complexity and consequently makes some instances of computation or control infeasible.\end{prob}
\end{flushright}

%%%%%%%%%%%%%%%%%%%%%%%%%%%%%%
%%%%%%%%%%%%%%%%%%%%%%%%%%%%%%

\section{Background} 
The Trotter Suzuki decomposition has been studied since 1959 when Trotter~\cite{Trotter1959} considered semi-groups of operators acting on Banach space. Later, it was shown~\cite{Suzuki1976,Suzuki1976a}  that 
\begin{equation}
U(\lambda) = U^\prime_{T}(\lambda) + \BigO{\lambda^2/r},
\end{equation}
where $U_T(\lambda)$, the Trotter decomposition is given by 
\begin{equation}\label{eq:Trotter}
U^\prime_{T}(\lambda) = \left(\text{e}^{A_1 \lambda/r}\text{e}^{A_2 \lambda/r}\cdots \text{e}^{A_m \lambda/r} \right)^r.
\end{equation}
For product formul\ae~like~\eqref{eq:Trotter}, the number of exponentials of $A_j$ needed in the decomposition quantifies the computational cost of the decomposition~\cite{Berry2006}. In the case of the Trotter formula~\eqref{eq:Trotter}, this cost scales~\cite{Lloyd1996a}  as $\BigO{\lambda^2}$ and can be improved~\cite{Berry2006} to $\BigO{\lambda^{1+1/2k}}$ for any integer $k$ by using the $k^\text{th}$-order recursive TSD~\cite{Suzuki1990,Suzuki1991,Suzuki1992,Suzuki1992a,Suzuki1993,Suzuki1994,Bandrauk2013}.

The TSD is given by the recursion relation
\begin{align}
	U_S^{(2)}(\lambda) &= \prod_{j =1}^{m} \text{e}^{A_j\lambda/2}\prod_{j^\prime =m}^{1}\text{e}^{A_{j^\prime}\lambda/2},\nonumber \\
	U_S^{(2k)}(\lambda) &= [U_S^{(2k-2)}(p_k\lambda)]^2  U_S^{(2k-2)}((1-4p_k)\lambda) [U_S^{(2k-2)}(p_k\lambda)]^2,
	\label{eq:S2k}
\end{align} 
where~$\{1/4 \le p_k \le 1\}$ is some sequence of real numbers and $S$ denotes Suzuki. This recursion relation~\eqref{eq:S2k} comprises one backward step ($1-4p_k < 0$) and four forward steps ($p_k > 0$).
The choice of this sequence determines how well the TSD performs. 
If we regard that the TSD is being performed by an infinitely precise machine, then the  error in the TSD arises due to non-commutativity of the generators $\{A_i\}$ . This ideal error is defined in terms of the operator norms as
\begin{equation}
	\label{eq:IdealError}
	\varepsilon_\text{ideal}\defeq\frac{\left\|\text{e}^{\sum_i A_i \lambda} - U_S^{(2k)}(\lambda)\right\|}{\left\|\text{e}^{\sum_i A_i \lambda} \right\|}
	= \BigO{\lambda^{1+1/2k}}
\end{equation}
%where we define
%\begin{equation}\label{eq:epsilonDef}
%\end{equation}
and is minimized by choosing~\cite{Suzuki1990}
\begin{equation}
\label{eq:p_k}
	p_k = \frac{1}{{4-4^{\frac{1}{2k-1}}}}.
\end{equation}

Equation~\eqref{eq:S2k} has one backward and four forward steps; a different number of steps could be used but such alternatives are avoided for the following reasons. Suzuki has shown~\cite{Suzuki1995} that a fully-positive (no backwards steps) or asymmetric (under reversal of order in which the operators act) decomposition lead to error scaling worse than that given by Eq.~\eqref{eq:IdealError}. A decomposition with one backward and two forward steps is unstable because, in this case, the coefficients $p_k$ diverge~\cite{Suzuki1995} with increasing $k$. We can easily see that a TSD with more than four forward steps has the same error scaling~\eqref{eq:IdealError}, but the constant coefficients in the scaling~\eqref{eq:IdealError} are then sub-optimal. Any other symmetric decomposition of $U(\lambda)$ can be reduced to these cases by combining sequences of unitary operators into a single unitary operators, and the case of one backwards and four forward~\eqref{eq:S2k} is optimal

The ideal error~\eqref{eq:IdealError} is minimized by dividing $\lambda$ into $r$ intervals before performing TSD of $\exp(A\lambda/r)$ to a finite order $k$ and concatenating these $r$ intervals together~\cite{Berry2006}.
This error is known to converge with the order $k$ of TSD~\cite{Suzuki1993,Suzuki1994}.

%%%%%%%%%%%%%%%%%%%%%%%%%%%%%%
%%%%%%%%%%%%%%%%%%%%%%%%%%%%%%

\section{Error Model}
\label{sec:ErrorModel}

Experimental or computational approximation of a unitary operation is imperfect. We show that we can model this imperfect approximation with unitary matrices plus a small matrix-valued~\cite{Tao2012} Gaussian random error. We argue that this model is valid for experimental implementation and classical computation and fair in the case of quantum computation.

The limitation of the experimental accuracy and the round-off error on classical or quantum computers can be captured by the same quantity, which we refer to as the machine error:
\begin{align}
 \varepsilon & \defeq \frac{\left\| U_S^{(2k)}  - \tilde{U}_S^{(2k)}\right\|}{{\left\| U_S^{(2k)}\right\|}} \\
&= \frac{\left\| \prod_{p=1}^N{U_p}  - \prod_{p=1}^N\tilde{U}_p\right\|}{{\left\|\prod_{p=1}^N{U_p} \right\|}}.
\label{eq:error}
\end{align} 
Above and henceforth, we use $\tilde{\bullet}$ to represent machine approximation of $\bullet$, which is an operator on the Banach space. The incurred net error 
\begin{align}
\varepsilon_\text{net} &\defeq \frac{\left\|\text{e}^{\sum_i A_i \lambda } - \prod_{p=1}^N\tilde{U}_p \right\|}{{\left\|\text{e}^{\sum_i A_i \lambda}\right\|}}\nonumber\\
 &\le \frac{\left\|\text{e}^{\sum_i A_i \lambda } - \prod_{p=1}^N{U_p} \right\|}{{\left\|\text{e}^{\sum_i A_i \lambda}\right\|}} + \frac{\left\| \prod_{p=1}^N{U_p}  - \prod_{p=1}^N\tilde{U}_p\right\|}{{\left\|\prod_{p=1}^N{U_p} \right\|}}
\label{eq:Both}
\end{align}
arises from two sources corresponding to the two terms in~\eqref{eq:Both}. The first term in~\eqref{eq:Both} is the error in the TSD assuming perfectly precise operators. This error arises from the non-commutativity of the implemented operators in the quantum and experimental cases and of the matrix multiplications performed in the case of classical computation. The second term is a consequence of the machine error~\eqref{eq:error}. Figure~\ref{fig:Error} depicts the two sources of error.
\begin{figure}[h]
  \centering
  \def\svgwidth{200pt}
  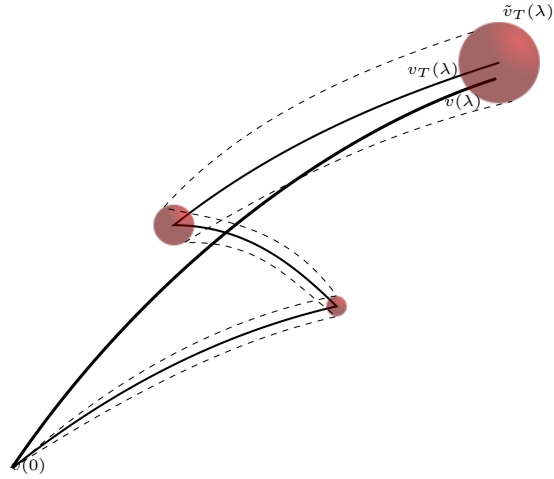
  \caption{A simple representation of the concept of growing machine error in the TSD.  For given~$U(\lambda)$, we see two trajectories from~$v(0)$. One trajectory follows a smooth line to~$v(\lambda)$ and represents the ideal transformation imposed by~$U(\lambda)$. The second trajectory follows three sequential smooth lines to~$v_\text{T}(\lambda)$, which is close to~$v(\lambda)$ but not equal. Each of these sequential smooth lines in the second case represents one~$U_p(\lambda)$ in the decomposition $U(\lambda) = \prod_{p=1}^NU_p$. The spheres in the figure repesent the set of states that the trajectory $\tilde{v}_T(\lambda)$ can attain in the presence of machine error.} \label{fig:Error}
\end{figure}

Machine error~$\varepsilon$~\eqref{eq:error} arises in quantum computers because a finite sequence of quantum gates would be used to approximate desired gates from the set $\{U_p\}$.  This approximation is performed using efficient algorithms~\cite{Kitaev2002classical,Harrow2002,Dawson2005,Kliuchnikov2013},  which have time complexity poly-logarithmic in $1/\varepsilon$. In the case of classical computers, which utilize floating point precision to perform the TSD, we use $\varepsilon$ to treat the imperfection due to round-off error. In experiments, machine error $\varepsilon$ arises from limited control precision.

To investigate the error $\varepsilon$ in $U_s^{(2k)}$, we first consider the error incurred in the ordinary operator exponentials $U_p$, which are imperfectly implemented as $\tilde{U}_p$ in computation or experiment. In experiments, repeated runs will encounter different error and hence, $\tilde{U}_p$ can be treated as random variables. Computationally, the probability density function (pdf) of $\left(\tilde{U}_p\right)$ will be a delta function if the TSD is performed repeatedly on the same computer; i.e., $\tilde{U}_p$ will not be correct and the discrepancy will be constant. Implementing the TSD repeatedly on machines with different instruction sets will lead to a non-zero spread in the pdf of $\left(\tilde{U}_p\right)_{ij}$.

We examine the distribution of the implemented operators $\tilde{U}_p$ in terms of the individual elements $\left(\tilde{U}_p\right)_{ij}$ in the matrix representation of $\tilde{U}_p$. Specifically, we treat these matrix elements as random variables and analyze their pdf.

The elements $\left(\tilde{U}_p\right)_{ij}$ of the matrix representations of the TSD operators depend upon a large number of other random variables. In computation or experiment, these latter random variables arise in intermediate steps of computation of $U_p$. Hence, the central limit theorem motivates the assumption that the matrix elements~$\left(\tilde{U}_p\right)_{ij}$ are Gaussian random variables.

Floating-point error in classical computation is proportional to the magnitude of the element. For mathematical tractability, we make the reasonable assumption that error in experiment and quantum computation is proportional to the magnitude of the implemented operation. Hence, we assume that the standard deviation $\sigma\left(\left(\tilde{U}_p\right)_{ij} \right)$ of the pdf for each matrix element is proportional to the size of the element itself:
\begin{equation}
	\sigma\left(\left(\tilde{U}_p\right)_{ij} \right) = \epsilon_m\left|\left(U_p\right)_{ij}\right|.
	\label{eq:elementError2}
\end{equation}
In other words, we define the constant machine epsilon $\epsilon_m$ as the relative, and not absolute, standard error in the elements of the matrix representation. This machine epsilon is a property of the computer or experiment used to effect the TSD and is fixed for a given implementation. To summarize, we treat these matrix elements as independent Gaussian random variables 
\begin{equation}
\label{eq:elementError}
\left(\tilde{U}_p\right)_{ij} \sim \mathcal{N}\left(\left(U_p\right)_{ij}, \epsilon_m^2\left|\left(U_p\right)_{ij}\right|^2\right) \forall~i,j \in \{1,2,\cdots,\ell\}~\forall p \in{1,2,\cdots,N}.
\end{equation}
with mean equal to the corresponding elements of $U_p$ and relative standard deviation $\epsilon_m$.

Our error model is only approximately valid for quantum computing as the unitarity of implemented operators contradicts the independence of the random matrix elements. We consider this simple model because it is amenable to rigorous theoretical analysis. This analysis lays a foundation for investigating the stability of product formul\ae~against more sophisticated error models.   

We employ the pdf~\eqref{eq:elementError} of $\left(\tilde{U}_p\right)_{ij}$ to analyze the pdf of $\varepsilon$ and consequently provide evidence for the stability of the TSD. The pdf
\begin{equation}
	P\left(\varepsilon; \left\{\left(U_p\right)_{ij}\right\}, \epsilon_m\right),
\end{equation}
of $\varepsilon$ is parametrized by the elements $\left\{\left(U_p\right)_{ij}\right\}$ of the ideal TSD operators and the machine epsilon $\epsilon_m$. In Section~\ref{sec:Theorems}, we use the mean $\mu(\varepsilon) \defeq E(\varepsilon)$ and standard deviation $\sigma(\varepsilon) \defeq \sqrt{E\left([\varepsilon-\mu(\varepsilon)]^2\right)}$ of this pdf $P(\varepsilon)$ to characterize the stability of the TSD for $E(\bullet)$ denoting expectation value.

We now present an example to illustrate the properties of the pdf corresponding to a simple case of operators $U_p$. We show that even in the simple case of  scalar multiplication, i.e.~$\ell = 1$ real matrices, $\mu(\varepsilon)$ and $\sigma(\varepsilon)$ diverge exponentially with $N$ and indicate instability in the straightforward implementation of the TSD. In the next section, we show that this instability in the TSD can be removed by suitably normalizing the operator elements.

\begin{example}\label{ex:EllOne}
If the operators $\{U_p\}$  are scalar, i.e., $1\times 1$ matrices with positive real number entries $\{r_p\}$ respectively, then the mean~$\mu(\varepsilon)$ and standard deviation~$\sigma(\varepsilon)$ of error $\varepsilon$~\eqref{eq:error} diverge exponentially with $N$ as  
\begin{align}
\mu\left(\varepsilon\right) &\geq \exp\left({N\epsilon_m^2/2}\right)-1,\\ 
\sigma(\varepsilon) &\geq \left(\exp\left(2N\epsilon_m^2\right)- \exp\left(N\epsilon_m^2\right) - 2\exp\left(N\epsilon_m^2/2\right)  +1\right)^{{1}/{2}}.
\end{align}
\end{example}
\begin{proof}[Solution]
The error due to machine precision $\varepsilon$ is given by 
\begin{align}
\varepsilon = \frac{\left\| \prod_{p=1}^N{r_p}  - \prod_{p=1}^N\tilde{r}_p\right\|}{{\left\|\prod_{p=1}^N{r_p} \right\|}} = \left| 1 - \prod_{p=1}^N X_p \right|,
\label{eq:varepsilonExample}
\end{align}
with 
\begin{equation}
X_p \defeq \frac{\tilde{r}_p}{ | r_p |}
\end{equation}
for $|\bullet|$ the absolute value of the real-number argument. Here $\left\{X_p\right\}$ is a set of independent and identically distributed Gaussian random variables 
\begin{equation}\label{eq:GaussianRandom}
X_p \sim \mathcal{N}\left(1,\epsilon_m^2\right)	.
\end{equation}

We first analyze the distribution of the product of the Gaussian random variables \eqref{eq:GaussianRandom} and use the analysis to learn about the distribution of $\varepsilon$. The product 
\begin{equation}
X \defeq \prod_{p=1}^N X_p
\end{equation}
is described by the log-normal distribution~\cite{Johnson1997} because 
\begin{equation}\label{eq:LogSumProduct}
\log(X) = \log\left(\prod_{p=1}^N X_p\right) = \sum_{p=N}^{1} \log\left(X_p\right),
\end{equation}
which is a sum of many independent random variables. For large $N$, this sum~\eqref{eq:LogSumProduct} follows the Gaussian distribution from the central limit theorem. The pdf  corresponding to the log-normal distribution of $X$ is
\begin{equation}
\label{eq:log-normal}
P_{\text{log-normal}}\left(X; \left\{r_p\right\}, \epsilon_m\right) = \frac{1}{X\sqrt{2\pi N}\epsilon_m} \exp\left(-\frac{\left(\ln X \right)^2}{2N\epsilon_m^2}\right),
\end{equation}
and its mean and and standard deviation
\begin{align}
\mu(X) &= \exp\left({N\epsilon_m^2/2}\right),\label{eq:MeanProduct}\\
\sigma(X) &= \left(\exp\left(2N\epsilon_m^2\right)- \exp\left(N\epsilon_m^2\right)\right)^{{1}/{2}} \label{eq:StdProduct}
\end{align}
 diverge exponentially~\cite{Johnson1997} with $N$:

We now use our knowledge of the distribution of $X$ to find the pdf of $\varepsilon=|1-X|$ in terms of the pdf of the log-normal distribution
\begin{align}
P(\varepsilon = |1-X|; \epsilon_m) = 
\left\{\begin{matrix}[l]
0, & \qquad X \le 0,\\
P_\text{log-normal}(1-X) + P_\text{log-normal}(1+x), & \qquad 0 < X.
\end{matrix}\right.
\end{align}
and, because $P_\text{log-normal}(\bullet)$ is defined only for positive values of $\bullet$, we have
\begin{align}
P(\varepsilon = |1-X|; \epsilon_m) = 
\left\{\begin{matrix}[l]
0, ~&\qquad \phantom{0<}~X \le 0,\\
P_\text{log-normal}(1-X) + P_\text{log-normal}(1+X), ~& \qquad 0 < X \le 1, \\
P_\text{log-normal}(1+X), ~& \qquad 1 < X. 
\end{matrix}\right.
\end{align}
Thus, $\varepsilon$ is described by the following folded~\cite{Leone1961} log-normal distribution
\begin{align}
P\left(\varepsilon;  \epsilon_m\right) =  
\left\{\begin{matrix}[l]
0, &\qquad\phantom{0<}~\varepsilon \le 0,\\
\frac{1}{\sqrt{2\pi N}\epsilon_m}\left[\frac{1}{(1-\varepsilon)}\exp\left(-\frac{\left(\ln (1-\varepsilon) \right)^2}{2N\epsilon_m^2}\right) +  
\frac{1}{(1+\varepsilon)} \exp\left(-\frac{\left(\ln (1+\varepsilon) \right)^2}{2N\epsilon_m^2}\right)\right], &\qquad 0 \le \varepsilon \le 1,\\
\frac{1}{\sqrt{2\pi N}\epsilon_m}\left[\frac{1}{(1+\varepsilon)} \exp\left(-\frac{\left(\ln (1+\varepsilon) \right)^2}{2N\epsilon_m^2}\right)\right], &\qquad 1 < \varepsilon.	
\end{matrix}\right.
\end{align}

The mean and standard deviation of $X$~\eqref{eq:log-normal} are straightforward but those of $\varepsilon$ are not. Consequentially, we only obtain bounds
\begin{align}
\mu\left(\varepsilon\right) &\geq \exp\left({N\epsilon_m^2/2}\right)-1,\label{eq:MeanLog-normal}\\ \sigma(\varepsilon) &\geq \left(\exp\left(2N\epsilon_m^2\right)- \exp\left(N\epsilon_m^2\right) - 2\exp\left(N\epsilon_m^2/2\right)  +1\right)^{{1}/{2}}.\label{eq:StdLog-normal}
\end{align}
In obtaining Eqs.~\eqref{eq:MeanLog-normal} and~\eqref{eq:StdLog-normal}, we have used
\begin{align}
\mu\left(|1-X|\right) \ge \mu\left(X-1\right) = \mu(X) - 1
\end{align}
and
\begin{align}
\sigma^2(|1-X|) &= \mu\left(|X -1|^2\right) - \mu^2(|X-1|) \nonumber\\
&= \sigma^2(X) + \mu\left(|X -1|^2\right) -\mu\left(X^2\right) +\mu^2\left(X\right) - \mu^2(|X-1|) \nonumber\\
&= \sigma^2(X) - \mu\left(2X-1\right) +\mu\left(X\right)^2 - \mu(|X-1|)^2 \nonumber\\
&\ge \sigma^2(X) - \mu\left(2X-1\right) =  \sigma^2(X) - 2\mu(X) + 1
\end{align}
along with values of $\mu(X)$ and $\sigma(X)$ from Eqs.~\eqref{eq:MeanProduct} and~\eqref{eq:StdProduct}.
Even in the simplest case of $\ell = 1$ decomposition with real operator elements, the mean and the standard deviation of the $\varepsilon$-distribution diverge exponentially with respect to $N$ as shown by the lower bound~\eqref{eq:MeanLog-normal} and~\eqref{eq:StdLog-normal}. This exponential divergence is an indication of the instability in the straightforward implementation of the TSD. 
\end{proof} 

In this section we have introduced a model for error in the TSD. We defined the distribution of the error and the mean and standard deviation of the same distribution. We also showed that for the case of $\ell = 1$ and positive real matrix elements, the mean and standard deviation of this distribution diverge, thereby indicating instability in the TSD. In the following section, we use the error distribution model introduced above to describe the stability properties of a general TSD and show that the TSD can be made stable by ensuring that the operator norms are bounded as described in the next section.

%%%%%%%%%%%%%%%%%%%%%%%%%%%%%%
%%%%%%%%%%%%%%%%%%%%%%%%%%%%%%

\section{Instability and stabilization}
\label{sec:Theorems}
In this section we show that the TSD is unstable in the presence of machine error~\eqref{eq:elementError}. This instability results from the standard deviation $\sigma(\varepsilon)$ of machine error $\varepsilon$ increasing exponentially with $N$ and implies that the implemented TSD might differ exponentially from the expected decomposition. 

On the other hand, if the values of both the standard and mean error grow polynomially in $N$, then there is exponentially (in $N$) small probability of $\varepsilon$ being exponentially large. In this latter case, the implemented TSD operator $\tilde{U}_S^{(2k)}$  provides a good approximation to the actual decomposition $U_S^{(2k)}$, and the decomposition is stable. We also show that the TSD can be stabilized by suitably setting the norm of operators in the TSD.

In the following propositions and the accompanying proofs, we use the  spectral norm on operator space as a distance measure because of the importance of this norm to quantum systems as described in Section~\ref{sec:Introduction}. Our results can be easily modified to other descriptions of error. 

In Lemma~\ref{lem:normEpsilon}, we relate the standard error of the TSD operators $\{U_p(\lambda_p)\}$~\eqref{eq:UpDefinition} to the machine epsilon~$\epsilon_m$~\eqref{eq:elementError2}. As expected, the upper and lower bound of the error in the TSD operators turn out to be proportional to the machine epsilon.

\begin{lemma} 
\label{lem:normEpsilon} Given machine epsilon $\epsilon_m$ and linear operators $\{A_i\}$, which act on $\ell$-dimensional Banach space and $A = \sum_j^m A_j$, the relative standard error of each $U_p$ is bounded above and below as
\begin{equation}
\epsilon_m \leq \sigma\left(\frac{\left\|U_p - \tilde{U}_p\right\|}{\left\|U_p\right\|}\right) \leq \epsilon_m \sqrt{\ell} \quad \forall p. \label{eq:Lem1}
\end{equation}
\end{lemma}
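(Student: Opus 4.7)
The plan is to recast the lemma in terms of the error matrix $E_p \defeq \tilde{U}_p - U_p$, whose entries are, by the error model~\eqref{eq:elementError}, independent zero-mean Gaussians with variances $\epsilon_m^2 |(U_p)_{ij}|^2$. Because $\|U_p\|$ is a deterministic number, the quantity to bound is $\sigma(\|E_p\|)/\|U_p\|$.

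For the upper bound, I would use the elementary inequality $\|E_p\| \leq \|E_p\|_F$ between the spectral and Frobenius norms together with the bound $\sigma^2(\|E_p\|) \leq E[\|E_p\|^2]$ of the variance by the second moment, and then exploit independence of the matrix entries to write
\begin{equation*}
E[\|E_p\|^2] \leq E[\|E_p\|_F^2] = \sum_{i,j} \epsilon_m^2 (U_p)_{ij}^2 = \epsilon_m^2 \|U_p\|_F^2.
\end{equation*}
The final ingredient is $\|U_p\|_F \leq \sqrt{\ell}\,\|U_p\|$, which holds because the squared Frobenius norm equals the sum of at most $\ell$ squared singular values, each bounded by $\|U_p\|^2$. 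Combining these yields $\sigma(\|E_p\|) \leq \epsilon_m \sqrt{\ell}\,\|U_p\|$ and hence the upper bound in~\eqref{eq:Lem1}.

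For the lower bound, I would project $E_p$ onto the right singular vector $v_p$ of $U_p$ associated with its largest singular value, a deterministic unit vector with $\|U_p v_p\| = \|U_p\|$. By the variational characterization of the spectral norm, $\|E_p\| \geq \|E_p v_p\|$, and the random vector $E_p v_p$ has independent Gaussian components $(E_p v_p)_i = \sum_j (v_p)_j (E_p)_{ij}$ whose variances sum to a quantity of order $\epsilon_m^2 \|U_p\|^2$. Reading off the standard deviation of the resulting weighted chi-type random variable $\|E_p v_p\|$ should then produce a lower bound of the desired order $\epsilon_m \|U_p\|$.

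The main obstacle is this lower bound, because the variance functional does not respect pointwise inequalities between random variables: from $\|E_p\| \geq \|E_p v_p\|$ one cannot immediately conclude $\sigma(\|E_p\|) \geq \sigma(\|E_p v_p\|)$. The cleanest way around this is to choose the probed scalar functional of $E_p$ (a specific entry, or the directional projection above) so that its own standard deviation already dominates $\epsilon_m \|U_p\|$, and then to invoke a matrix-perturbation or concentration argument that directly relates $\sigma(\|E_p\|)$ to this tractable scalar. The upper bound, by contrast, requires only norm comparisons and Gaussian second moments and should be essentially routine.
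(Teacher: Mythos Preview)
Your upper bound is correct and takes a genuinely different route from the paper. The paper invokes the limiting spectral distribution of the random error matrix (a Marchenko--Pastur-type result, citing Bai), observes that the support of the singular-value density lies in $[0,2\epsilon_m\sqrt{\ell}]$, and concludes that the standard deviation cannot exceed half the range. Your argument instead passes through the Frobenius norm: $\sigma^2(\|E_p\|)\le E[\|E_p\|^2]\le E[\|E_p\|_F^2]=\epsilon_m^2\|U_p\|_F^2\le \epsilon_m^2\ell\|U_p\|^2$. This is more elementary, is non-asymptotic (the spectral-distribution argument is strictly speaking a limiting law in $\ell$), and does not require the entries to be identically distributed, which they are not under the model~\eqref{eq:elementError}. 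What the paper's approach buys is information about the full singular-value law, but for the bare inequality your route is cleaner.

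For the lower bound you have correctly identified the real difficulty: the implication $\|E_p\|\ge\|E_p v_p\|$ does not transfer to standard deviations, so your projection idea does not close the argument as stated. It is worth knowing that the paper does not close it either. Its entire treatment of the lower bound is the remark that the inequality is \emph{saturated} by a matrix with a single nonzero entry (equivalently, by the case $\ell=1$); it never establishes $\sigma(\|E_p\|/\|U_p\|)\ge\epsilon_m$ for a general $U_p$. So your proposal is already more ambitious than what the paper actually proves. If you want to match the paper, it suffices to exhibit the extremal case; if you want to go beyond it, the concentration/perturbation route you sketch is the right direction, but you should expect to need an additional ingredient (for instance a lower bound on $E[\|E_p\|^2]$ combined with an upper bound on $E[\|E_p\|]^2$, rather than a pointwise comparison) to control the variance from below.
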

\begin{proof}
The lower bound is saturated by a matrix with only one non-zero element. The lower-bound equality holds trivially in the special case of $\ell = 1$, as described in Example~\ref{ex:EllOne}.

The upper bound of the spectral norm follows from the spectral distribution~\cite{Bai1999}, which is the distribution of the singular values of the matrix, of the normalized random matrix $\frac{\left\|U_p - \tilde{U}_p\right\|}{\left\|U_p\right\|}$. The distribution of the square of singular values tends to the following limiting pdf
\begin{align}
p\left(x_p^2 \defeq\left[\frac{\left\|U_p - \tilde{U}_p\right\|}{\left\|U_p\right\|}\right]^2\right) = 
\left\{
\begin{array}{ll}
\frac{\sqrt{(4\epsilon_m^2\ell - x_p^2)(x_p^2)}}{2\pi x_p^2\epsilon_m^2\ell},&~\text{if}~ 0\le x_p \le 2\epsilon_m\sqrt{\ell},\\
0,& ~\text{otherwise}.
\end{array}\right.\label{eq:pdfBai}
\end{align}
The probability density is non-zero only in the domain $[0,2\epsilon_m\sqrt{l}]$. Thus, its standard deviation~\eqref{eq:Lem1} bounded above by $\epsilon_m \sqrt{\ell}$.
\end{proof}

In Lemma~\ref{lem:normEpsilon}, we proved upper and lower bounds of the standard deviation of the error in a single operator $U_p$ under the assumption that the error is of the form~\eqref{eq:elementError}. The TSD $U_S^{(2k)}$ comprises $\BigO{\tau^{1+1/2k}}$ operators of the form $U_p$. In Theorem~\ref{thm:Unstable}, we use the bounds derived in Lemma~\ref{lem:normEpsilon} to relate the TSD error $\varepsilon$ to the machine epsilon. Specifically, we calculate a lower bound for $\sigma(\varepsilon)$ for a straightforward implementation of the TSD, in which the operators $U_p$ are not normalized before composition. The lower bound is an exponentially growing function of $N$, the number of operators in the decomposition. This exponential growth in error implies the instability of the straightforward implementation of the TSD.

\begin{theorem}
\label{thm:Unstable}
Given machine epsilon $\epsilon_\text{m}$ and  linear operators ${A_i}$, which act on $\ell$-dimensional Banach space and $A = \sum_{i=1}^m A_i$, the standard deviation 
\begin{equation}
\sigma\left(\varepsilon\right) \ge  N\ell^{(N-1)/2}\epsilon_m .
\end{equation}
of error of the TSD is exponentially divergent with respect to $N$.
\end{theorem}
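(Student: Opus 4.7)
My plan is to combine the single-operator standard-error bound from Lemma~\ref{lem:normEpsilon} with a telescoping decomposition of the product error, and then track how each surrounding matrix multiplication amplifies the per-operator fluctuation in the spectral norm.

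First I would write the numerator appearing in $\varepsilon$ as the standard telescoping identity
$$\prod_{p=1}^N U_p - \prod_{p=1}^N \tilde U_p = \sum_{q=1}^N \left(\prod_{p<q}U_p\right)\bigl(U_q-\tilde U_q\bigr)\left(\prod_{p>q}\tilde U_p\right),$$
so that each summand isolates the random fluctuation of a single $\tilde U_q$ between a deterministic left product and an independent random right product. Because the $\tilde U_p$ are independent matrix-valued random variables under the model~\eqref{eq:elementError}, the $N$ summands on the right are mutually independent.

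Next I would bound from below the standard deviation of a single summand. By Lemma~\ref{lem:normEpsilon}, the central factor satisfies $\sigma(\|U_q-\tilde U_q\|)\ge\epsilon_m\|U_q\|$. Sandwiching this random operator between matrices on an $\ell$-dimensional space can spread the error across up to $\ell$ independent coordinate directions per multiplication: concretely, the inequalities $\|W\|\le\|W\|_F\le\sqrt\ell\,\|W\|$ show that a single extra matrix factor can contribute up to $\sqrt\ell$ to the spectral norm of a product, and this is attained generically. Applied to the $N-1$ non-central factors in the $q$-th telescope term, this gives a worst-case multiplicative amplification of $\ell^{(N-1)/2}$, so that each normalised term contributes at least $\epsilon_m\ell^{(N-1)/2}$ to $\sigma(\varepsilon)$.

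Finally I would combine the $N$ contributions. Because $\varepsilon$ is a spectral norm of a sum, the telescope terms do not add in standard deviation the way independent scalars would; however, for a worst-case choice of generators $\{A_i\}$ (for instance commuting, jointly diagonalisable ones) the summands act on disjoint coordinate sectors and their norms add constructively, which yields the claimed bound
$$\sigma(\varepsilon)\ge N\ell^{(N-1)/2}\epsilon_m.$$
The main obstacle will be controlling this last combination step rigorously: since the spectral norm is only sub-additive, one cannot simply sum per-term standard deviations in general, and a genuine lower bound requires exhibiting an extremal configuration of the $\{A_i\}$ for which the telescope terms cannot cancel. A clean route is to reduce the matrix problem to $\ell$ independent copies of the scalar problem of Example~\ref{ex:EllOne} by taking diagonal generators, which makes the $N$ and $\ell^{(N-1)/2}$ factors transparent while preserving the worst-case status needed for a lower bound valid over all admissible $\{A_i\}$.
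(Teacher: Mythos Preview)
Your telescoping route differs from the paper's and contains two genuine gaps. First, the $N$ telescope summands are \emph{not} mutually independent: the $q$-th summand carries the random factors $\tilde U_{q+1},\dots,\tilde U_N$ in its right product, and these same operators reappear in every summand with index $q'>q$ (and $\tilde U_{q+1}$ also appears as the central factor of the $(q{+}1)$-th term). Only the deterministic left product and the single difference $U_q-\tilde U_q$ are fresh in each term, so you cannot add per-term standard deviations as if they were independent. Second, your proposed ``clean route'' of taking commuting, jointly diagonalisable generators undercuts rather than supports the $\ell^{(N-1)/2}$ factor: if every $U_p$ is diagonal, the product acts entry-wise and decouples into $\ell$ independent copies of the scalar problem of Example~\ref{ex:EllOne}, with no coordinate mixing and hence no $\sqrt\ell$ amplification per multiplication. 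That scalar example gives growth of order $\exp(N\epsilon_m^2/2)$, not $N\ell^{(N-1)/2}\epsilon_m$. The Frobenius--spectral inequality $\|W\|\le\|W\|_F\le\sqrt\ell\,\|W\|$ is an \emph{upper} bound on possible amplification, not a lower bound attained generically, so it cannot by itself deliver the claimed inequality.

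The paper obtains the factor $\ell^{(N-1)/2}$ by a different mechanism entirely. It expands a single matrix element of the product as an $\ell^{N-1}$-fold sum over intermediate indices,
\[
\bigl(\tilde U_S^{(2k)}\bigr)_{i_N i_0}=\sum_{i_1,\dots,i_{N-1}=1}^{\ell}\ \prod_{p=1}^N(\tilde U_p)_{i_p i_{p-1}},
\]
observes that each of the $\ell^{N-1}$ summands is a product of $N$ independent Gaussian entries with relative standard error $\approx N\epsilon_m$, and then invokes the central limit theorem to get the $\sqrt{\ell^{N-1}}$ scaling for the sum; Lemma~\ref{lem:normEpsilon} transfers this element-wise estimate to the spectral norm. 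Thus the $\ell^{(N-1)/2}$ is a CLT factor counting intermediate-index summands in the matrix-product formula, not a norm-amplification factor accumulated through successive multiplications.
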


\begin{proof} 
In this proof, we use the matrix representation of the TSD operators $\tilde{U}_S^{(2k)}(\lambda) = \prod_{p = 1}^N \tilde{U}_p(\lambda_p)$. Consider the matrix elements of the matrix product
\begin{align}
	\left(\tilde{U}_S^{(2k)}\right)_{i_Ni_0} &= \sum_{i_{N-1}=1}^\ell\dots\sum_{i_2=1}^\ell\sum_{i_1=1}^\ell \left[\left(\tilde{U}_N\right)_{i_N,i_{N-1}}\dots\left(\tilde{U}_3\right)_{i_3,i_2}\left(\tilde{U}_2\right)_{i_2,i_1}\left(\tilde{U}_1\right)_{i_1,i_0}\right]\nonumber\\
	&= \sum_{i_{N-1}=1}^\ell\dots\sum_{i_2=1}^\ell\sum_{i_1=1}^\ell \left[\prod_{p=1}^N(\tilde{U}_p)_{i_p i_{p-1}}\right]
	\label{elementTSD}
\end{align}
where each of the elements $(\tilde{U}_p)_{i_p i_{p-1}}$ on the right-hand side is a random variable with distribution described by Eq.~$\eqref{eq:elementError}$. The right side of Eq.~\eqref{elementTSD} consists of an $\ell^{N-1}$ term summation of products, each with $N$ such elements as factors. If the standard deviation of error in each element $\epsilon_m$~\eqref{eq:elementError} is small, then the standard deviation of error the product $\prod_{p=1}^N(\tilde{U}_p)_{i_p i_{p-1}}$ is the sum of the relative errors of the factors~\cite{Ku1966}. Hence, for the product of $N$ elements, which occurs in Eq.~\eqref{elementTSD}, we have
\begin{equation}
	\sigma\left(\prod_{p=1}^N\left(\tilde{U}_p\right)_{i_p i_{p-1}}\right) = N\epsilon.
\end{equation}

We consider the standard deviation of the error in the summation over $\ell^{N-1}$ terms of Eq.~\eqref{elementTSD}. From the central limit theorem, we conclude that this error we have Gaussian distribution with a standard deviation
\begin{align}
\sigma \left(\sum_{i_{N-1}=1}^\ell\dots\sum_{i_2=1}^\ell\sum_{i_1=1}^\ell \left[\prod_{p=1}^N(\tilde{U}_p)_{i_p i_{p-1}}\right]\right) &= \ell^{(N-1)/2} \sigma\left(\prod_{p=1}^N\left(\tilde{U}_p\right)_{i_p i_{p-1}}\right)\nonumber\\
&= \ell^{(N-1)/2}\left(N\epsilon_m\right).
\end{align}
From Lemma~\ref{lem:normEpsilon}, we know that the standard error in the norm of the $\tilde{U}_S^{(2k)}$ is always greater than the standard error of the elements of its matrix representation. We thus have 
\begin{equation}
\label{eq:Th2Error}
\sigma\left(\varepsilon\right) \ge  N\ell^{(N-1)/2}\epsilon_m;
\end{equation}
i.e., the standard error of the TSD diverges at least exponentially, as $\ell^{(N-1)/2}$, in the number $N$ of the operators in the TSD.
\end{proof}

In Theorem~\ref{thm:Unstable} above, we showed that the TSD is inherently unstable because $\sigma(\varepsilon)$ diverges. However, as we shall show in Theorem~\ref{thm:normalize}, this instability can be removed by ensuring correct normalization in the TSD. In Theorem~\ref{thm:normalize}, we provide a sufficiency condition on the norm of the TSD operators to ensure that $\sigma(\varepsilon)$ grows no faster than a linear function of $N$.
\begin{theorem}
\label{thm:normalize}
Given machine epsilon $\epsilon_\text{m}$ and linear operators $\{A_i\}$ which act on $\ell$-dimensional Banach space and $A = \sum_{i=1}^m A_i$,  if 
\begin{equation}
\sigma\left(\frac{\left\|\tilde{U}_p\right\|}{\left\|U_p\right\|}\right) \le \frac{1}{\sqrt{N}},
\label{eq:Condition}
\end{equation}
then the relative standard error
\begin{equation}
\sigma(\varepsilon) \le N\epsilon_m\sqrt{5\text{e}^2-4\rme}
\end{equation}
 in the TSD grows linearly in the worst case over the parameters $\left\{\left(U_p\right)_{ij}\right\}$ of the distribution.
\end{theorem}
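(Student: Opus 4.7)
The plan is to recast the difference $U_S^{(2k)}-\tilde U_S^{(2k)}=\prod_{p=1}^{N}U_p-\prod_{p=1}^{N}\tilde U_p$ as a telescoping sum over positions in the product, bound each summand via submultiplicativity of the spectral norm, and then use hypothesis~\eqref{eq:Condition} to show that the resulting products of relative norms have variance controlled by a constant independent of $N$. Subadditivity of the standard deviation across the $N$ telescoping terms will then convert this per-term estimate into the advertised linear-in-$N$ growth.

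First, I would invoke the telescoping identity
\begin{equation}
\prod_{p=1}^{N}U_p-\prod_{p=1}^{N}\tilde U_p = \sum_{q=1}^{N}\left(\prod_{p=q+1}^{N}U_p\right)(U_q-\tilde U_q)\left(\prod_{p=1}^{q-1}\tilde U_p\right),
\end{equation}
which interpolates between the ideal and implemented products by replacing one factor at a time. Taking spectral norms, using submultiplicativity $\|AB\|\le\|A\|\|B\|$, and dividing by $\|\prod_p U_p\|$ in the announced worst case over $\{(U_p)_{ij}\}$ (so that $\|\prod_p U_p\|=\prod_p\|U_p\|$), I would reach
\begin{equation}
\varepsilon \;\le\; \sum_{q=1}^{N}\xi_q\prod_{p=1}^{q-1}\rho_p,\qquad \xi_q\defeq\frac{\|U_q-\tilde U_q\|}{\|U_q\|},\qquad \rho_p\defeq\frac{\|\tilde U_p\|}{\|U_p\|}.
\end{equation}

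Next, I would apply subadditivity of standard deviation, $\sigma\bigl(\sum_q Y_q\bigr)\le\sum_q\sigma(Y_q)$, and so reduce the task to estimating $\sigma\bigl(\xi_q\prod_{p<q}\rho_p\bigr)$ for each $q$. Because distinct $\tilde U_p$ have independent matrix elements under the error model~\eqref{eq:elementError}, the variable $\xi_q$ and the factors $\{\rho_p\}_{p<q}$ are mutually independent, hence
\begin{equation}
\sigma^2\!\left(\xi_q\prod_{p<q}\rho_p\right) = E(\xi_q^2)\,E\!\left(\prod_{p<q}\rho_p^2\right) - E(\xi_q)^2\,E\!\left(\prod_{p<q}\rho_p\right)^{\!2}.
\end{equation}
Hypothesis~\eqref{eq:Condition} provides $E(\rho_p^2)=E(\rho_p)^2+\sigma^2(\rho_p)\le E(\rho_p)^2+1/N$, after which the elementary estimate $(1+1/N)^{N}\le\rme$ bounds $E(\prod_{p<q}\rho_p^2)$ by a constant independent of $N$; this is precisely where the exponential divergence encountered in Theorem~\ref{thm:Unstable} is tamed. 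Combining this with Lemma~\ref{lem:normEpsilon} to bound $E(\xi_q)$ and $E(\xi_q^2)$ by small multiples of $\epsilon_m$, and with the reverse triangle inequality $\bigl|\|\tilde U_p\|-\|U_p\|\bigr|\le\|\tilde U_p-U_p\|$ to pass between $\rho_p$ and $\xi_p$ where needed, should produce $\sigma\bigl(\xi_q\prod_{p<q}\rho_p\bigr)\le\epsilon_m\sqrt{5\rme^2-4\rme}$ for every $q$; summing $N$ such contributions then yields $\sigma(\varepsilon)\le N\epsilon_m\sqrt{5\rme^2-4\rme}$.

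The main obstacle will be bookkeeping at two points. First, the identification $\|\prod_p U_p\|=\prod_p\|U_p\|$ used when dividing is saturated only in the worst-case parameter regime singled out by the theorem, so that qualifier must be invoked honestly; extending to arbitrary parameters would require a separate lower bound on $\|\prod_p U_p\|$. Second, pinning down the precise constant $\sqrt{5\rme^2-4\rme}$ demands simultaneous control of the means and variances of both $\rho_p$ and $\xi_q$, not merely their standard deviations, because the coefficient emerges from balancing the $E(\xi_q^2)E(\prod\rho_p^2)$ and $E(\xi_q)^2E(\prod\rho_p)^2$ contributions in the variance formula; maximising over the permissible means of $\rho_p$ subject to $\sigma(\rho_p)\le 1/\sqrt{N}$ is what ultimately fixes the coefficient at $5\rme^2-4\rme$. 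Once these two points are handled, the linear-in-$N$ scaling claimed by the theorem follows immediately from subadditivity of $\sigma$ across the $N$ telescoping terms, establishing stability of the TSD under the normalization condition~\eqref{eq:Condition}.
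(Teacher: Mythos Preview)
Your proposal is correct and follows essentially the same route as the paper: telescoping the difference $\prod U_p-\prod\tilde U_p$, bounding each summand by submultiplicativity to get $\varepsilon\le\sum_q\xi_q\prod_{p<q}\rho_p$, invoking Lemma~\ref{lem:normEpsilon} for the moments of $\xi_q$, and using the hypothesis~\eqref{eq:Condition} to prevent the product $\prod\rho_p$ from blowing up exponentially.

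The only notable divergence is in bookkeeping. Where you propose the elementary estimate $E(\rho_p^2)\le E(\rho_p)^2+1/N$ together with $(1+1/N)^N\le\rme$, the paper instead models $Y=\prod_p\rho_p$ as log-normal and reads off $E(Y)\le\rme^{1/2}$ and $\sigma^2(Y)\le\rme^2-\rme$ directly from the log-normal moment formulas; this, combined with $\sigma(\xi_q)\le\epsilon_m\sqrt{\ell}$ and $E(\xi_q)\le 2\epsilon_m\sqrt{\ell}$ from the spectral distribution~\eqref{eq:pdfBai}, is exactly what produces the coefficient $5\rme^2-4\rme=(\rme^2-\rme)+\rme+4(\rme^2-\rme)$ via the identity $\sigma^2(XY)=\sigma^2(X)\sigma^2(Y)+\sigma^2(X)E^2(Y)+\sigma^2(Y)E^2(X)$. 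Your elementary route will not land on that particular constant without additionally invoking $E(\rho_p)\approx 1$ (which does follow from the error model~\eqref{eq:elementError}), so if you want to match the stated bound verbatim you should swap in the log-normal argument at that step; otherwise your approach yields the same linear-in-$N$ scaling with a slightly different constant.
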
	

\begin{remark} Note that $\rme$ is the base of the natural logarithm.\end{remark}

\begin{proof}
		We are interested in obtaining an upper bound on the error
	\begin{align}
			\varepsilon =& \frac{\left\|U_{N} U_{N-1} \dots U_{2} U_{1} - \tilde{U}_{N} \tilde{U}_{N-1} \dots \tilde{U}_{2} \tilde{U}_{1}\right\|}{\left\|U_{N} U_{N-1} \dots U_{2} U_{1}\right\|}\nonumber\\
			\le&  \frac{\left\|U_{N} U_{N-1} \dots U_{2} U_{1} - U_{N} U_{N-1} \dots U_{2} \tilde{U}_{1}\right\|}{\left\|U_{N} U_{N-1} \dots U_{2} U_{1}\right\|}\nonumber\\
 &+ \frac{\left\|U_{N} U_{N-1} \dots U_{2} \tilde{U}_{1} - U_{N} U_{N-1} \dots \tilde{U}_{2} \tilde{U}_{1}\right\|}{\left\|U_{N} U_{N-1} \dots U_{2} U_{1}\right\|}\nonumber\\
 &+\cdots\nonumber\\
 &+ \frac{\left\|U_{N} U_{N-1} \dots \tilde{U}_{2} \tilde{U}_{1} - U_{N} \tilde{U}_{N-1} \dots \tilde{U}_{2} \tilde{U}_{1}\right\|}{\left\|U_{N} U_{N-1} \dots U_{2} U_{1}\right\|}\nonumber\\
 &+ \frac{\left\|U_{N} \tilde{U}_{N-1} \dots \tilde{U}_{2} \tilde{U}_{1} - \tilde{U}_{N} \tilde{U}_{N-1} \dots \tilde{U}_{2}\tilde{U}_{1} \right\|}{\left\|U_{N} U_{N-1} \dots U_{2} U_{1}\right\|},
\end{align}
where we have added and subtracted terms of the form $U_N\cdots U_{p+1}\tilde{U_p}\cdots\tilde{U_1}$ and have used the triangle inequality of the operator norm on the Banach space. This gives us
\begin{align}
\varepsilon \le&  \frac{\left\|U_{1} - \tilde{U}_{1}\right\|}{\left\|U_{1}\right\|} +
\frac{\left\|U_{2} - \tilde{U}_{2}\right\|}{\left\|U_{2}\right\|}\frac{\left\|\tilde{U}_1\right\|}{\left\|U_1\right\|} +\cdots \nonumber
\\
 &+ \frac{\left\|U_{N-1} - \tilde{U}_{N-1}\right\|}{\left\|U_{N-1}\right\|}\frac{\left\|\tilde{U}_{N-2}\dots\tilde{U}_{2}\tilde{U}_{1}\right\|}{\left\|U_{N-2}\dots U_{2}U_1\right\|}\nonumber\\ 
 &+\frac{\left\|U_{N} - \tilde{U}_{N}\right\|}{\left\|U_{N}\right\|}\frac{\left\|\tilde{U}_{N-1}\dots\tilde{U}_{2}\tilde{U}_{1}\right\|}{\left\|U_{N-1}\dots U_{2}U_1\right\|}.\label{eq:AfterTriangle}
	\end{align}
The terms in the above summation depend on random variables $\left(\tilde{U}_p\right)_{i_{p}i_{p-1}}$~\eqref{eq:elementError}). To obtain the upper bound of the standard error in this $N$-term summation, we replace each term by the one with the highest standard error, i.e., the final term~\eqref{eq:AfterTriangle} and thereby obtain the following upper bound on the standard deviation of the summation~\cite{Ku1966}
\begin{align}\label{eq:varepsilonLeq}
	\sigma^2\left(\varepsilon\right) &\le \sigma^2\left(\frac{\left\|U_p-\tilde{U}_p\right\|}{\left\|U_p\right\|}\frac{\left\|\tilde{U}_{N-1}\dots\tilde{U}_{2}\tilde{U}_{1}\right\|}{\left\|U_{N-1}\dots U_{2}U_1\right\|}\right)  N .
\end{align}

To calculate the standard deviation of the above product of random variables
\begin{equation}
X = \frac{\left\|U_p-\tilde{U}_p\right\|}{\left\|U_p\right\|} \hspace{1cm} Y = \frac{\left\|\tilde{U}_{N-1}\dots\tilde{U}_{2}\tilde{U}_{1}\right\|}{\left\|U_{N-1}\dots U_{2}U_1\right\|}
\end{equation}
 we use the identity%~\cite{goodman1960}
\begin{equation}
\sigma^2(XY) = \sigma^2(X)\sigma^2(Y) + \sigma^2(X)E^2(Y) + \sigma^2(Y)E^2(X), 
\end{equation}
where $\sigma^2(\bullet)$ and $E(\bullet)$ represent the variance and the expectation value respectively. Thus,
\begin{align}
\sigma^2\left(\frac{\left\|U_p-\tilde{U}_p\right\|}{\left\|U_p\right\|}\frac{\left\|\tilde{U}_{N-1}\dots\tilde{U}_{2}\tilde{U}_{1}\right\|}{\left\|U_{N-1}\dots U_{2}U_1\right\|}\right) 
\le~& \sigma^2\left(\frac{\left\|U_p-\tilde{U}_p\right\|}{\left\|U_p\right\|}\right)\sigma^2\left(\prod_{p=1}^N\frac{\left\| {\tilde{U}_p}\right\|}{\left\|{U_p} \right\|}\right) \nonumber\\
&+ \sigma^2\left(\frac{\left\|U_p-\tilde{U}_p\right\|}{\left\|U_p\right\|}\right)E^2\left(\frac{\left\|\tilde{U}_{N-1}\dots\tilde{U}_{2}\tilde{U}_{1}\right\|}{\left\|U_{N-1}\dots U_{2}U_1\right\|}\right)\nonumber \\
&+ \sigma^2\left(\prod_{p=1}^N\frac{\left\| {\tilde{U}_p}\right\|}{\left\|{U_p} \right\|}\right)E^2\left(\frac{\left\|U_p-\tilde{U}_p\right\|}{\left\|U_p\right\|}\right).\label{eq:identity}
\end{align}

We now evaluate the  standard error and the expectation values of the two relevant random variables in Eq.~\eqref{eq:identity}. We use Lemma $\ref{lem:normEpsilon}$ to obtain the standard error
\begin{equation}	\label{eq:MaxError}
\sigma\left(\frac{\left\|U_p-\tilde{U}_p\right\|}{\left\|U_p\right\|}\right) \le \epsilon_m \sqrt{\ell}.
\end{equation}
of $X$. The expectation value \begin{equation}
	E\left(\frac{\left\|U_p-\tilde{U}_p\right\|}{\left\|U_p\right\|}\right) \le 2\epsilon_m\sqrt{\ell}.
\end{equation}
of $X$ follows from the pdf of Eq.~\eqref{eq:pdfBai}.
 The random variable  $Y = \prod_{p=1}^N{\left\| {\tilde{U}_p}\right\|}/{\left\|{U_p} \right\|}$ is a product of $N$ independent positive random variables. Thus, the product has a log-normal distribution~\cite{Johnson1997}, which gives us
\begin{equation}
\sigma\left(\prod_{p=1}^N\frac{\left\| {\tilde{U}_p}\right\|}{\left\|{U_p} \right\|}\right) \le  
\sqrt{\rme^{2N\sigma^2\left(\frac{\left\| {\tilde{U}_q}\right\|}{\left\|{U_q} \right\|}\right)} - \rme^{N\sigma^2\left(\frac{\left\| {\tilde{U}_q}\right\|}{\left\|{U_q} \right\|}\right)}}
\end{equation}
and
\begin{equation}
	E\left(\prod_{p=1}^N\frac{\left\| {\tilde{U}_p}\right\|}{\left\|{U_p} \right\|}\right)  \le \rme^{\frac{N}{2}\sigma^2\left(\frac{\left\|\tilde{U}_p\right\|}{\left\|U_p\right\|}\right)}.
\end{equation}
These upper bounds on the mean and variance of the respective random variable correspond to the worst-case parameterization of $\varepsilon$.

Thus, under the condition that 
		\begin{equation}
		\sigma\left(\frac{\left\|\tilde{U}_p\right\|}{\left\|U_p\right\|}\right) \le \frac{1}{\sqrt{N}},
		\end{equation} we have
		\begin{align}
			\sigma\left(\varepsilon\right)
			&\le N\sqrt{\epsilon_m^2\ell\left(\rme^2-\rme\right) + \epsilon_m^2\ell\rme + 4\epsilon_m^2\ell\left(\rme^2-\rme\right)}\nonumber\\
			&= N\epsilon_m\sqrt{5\rme^2-4\rme}. \label{eq:final}
		\end{align}
Hence, the upper bound of $\sigma(\varepsilon)$ grows at most linearly in $N$ if the sufficiency condition~\eqref{eq:Condition} is obeyed.
\end{proof}

In Theorem~\ref{thm:normalize}, we showed that the error in the TSD grows linearly in the worst-case subject to the condition~\eqref{eq:Condition}. In fact, as we show in Corollary~\ref{cor:Classical}, it is possible to implement the TSD such that $\varepsilon$ is bounded by a prespecified error tolerance by setting the machine epsilon to be appropriately low. 
\begin{cor}
\label{cor:Classical}
Given error tolerance $\varepsilon_t$, if conditions
		\begin{equation}
			\sigma\left(\frac{\left\|\tilde{U}_p\right\|}{\left\|U_p\right\|}\right) \le \frac{1}{\sqrt{N}}\label{eq:Condition1},
		\end{equation}
 and
		\begin{align}
			\epsilon_m \le \frac{\varepsilon_t}{N\sqrt{\ell(5\rme^2-4\rme)}}\label{eq:Condition2}
		\end{align}
hold, then 
\begin{equation}
	\sigma(\varepsilon) \le  \varepsilon_t.
\end{equation}
\end{cor}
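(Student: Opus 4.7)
The plan is to derive this corollary as a direct substitution into the worst-case bound established in Theorem~\ref{thm:normalize}. Condition~\eqref{eq:Condition1} is precisely the hypothesis needed to invoke that theorem, so under this condition the argument culminating in~\eqref{eq:final}, together with the $\sqrt{\ell}$ factor supplied by Lemma~\ref{lem:normEpsilon}, yields
\begin{equation*}
\sigma(\varepsilon) \le N\epsilon_m\sqrt{\ell\left(5\rme^2 - 4\rme\right)}.
\end{equation*}
This bound is linear in both $N$ and $\epsilon_m$, which is exactly the feature I will exploit.

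Given any target tolerance $\varepsilon_t$, the plan is then to choose $\epsilon_m$ small enough that the right-hand side of the bound above does not exceed $\varepsilon_t$. Solving the inequality $N\epsilon_m\sqrt{\ell(5\rme^2-4\rme)} \le \varepsilon_t$ for $\epsilon_m$ produces the hypothesis~\eqref{eq:Condition2}. Substituting this choice back in,
\begin{equation*}
\sigma(\varepsilon) \le N \cdot \frac{\varepsilon_t}{N\sqrt{\ell(5\rme^2-4\rme)}} \cdot \sqrt{\ell(5\rme^2-4\rme)} = \varepsilon_t,
\end{equation*}
which is the claim.

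There is essentially no technical obstacle; the corollary is an algebraic inversion of the linear-in-$\epsilon_m$ bound from Theorem~\ref{thm:normalize}. The only point that warrants a careful check is the dimensional factor: tracing the three variance terms in the proof of Theorem~\ref{thm:normalize} (each of which inherits an $\ell$ via $\sigma(\|U_p-\tilde{U}_p\|/\|U_p\|) \le \epsilon_m\sqrt{\ell}$ from Lemma~\ref{lem:normEpsilon}) confirms that $\sqrt{\ell}$ sits inside the radical, matching the form of~\eqref{eq:Condition2}. Once this is verified the corollary follows immediately and delivers the main practical message: a practitioner with free parameter $\epsilon_m$ can always pre-set machine precision as an explicit function of $N$, $\ell$ and $\varepsilon_t$ to guarantee any desired worst-case standard error.
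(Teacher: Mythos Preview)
Your proposal is correct and matches the paper's own proof: both simply invoke Theorem~\ref{thm:normalize} under condition~\eqref{eq:Condition1} and then substitute the bound~\eqref{eq:Condition2} on $\epsilon_m$ to obtain $\sigma(\varepsilon)\le\varepsilon_t$. Your explicit tracking of the $\sqrt{\ell}$ factor is in fact more careful than the paper's printed statement of Theorem~\ref{thm:normalize}, where that factor is silently dropped in~\eqref{eq:final} (evidently a typo, since it reappears correctly in~\eqref{eq:Condition2}).
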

\begin{proof} 
The objective is to bound $\sigma(\varepsilon)$ by a specified error tolerance $\varepsilon_t$. This can be done by setting the machine epsilon such that 
		\begin{align}
			\epsilon_m \le \frac{\varepsilon_t}{N\sqrt{\ell(5\rme^2-4\rme)}}.\label{eq:Condition3}
		\end{align}
From Theorem~\ref{thm:normalize}, we infer that condition~\eqref{eq:Condition1} ensures
		\begin{align}
			\sigma\left(\varepsilon\right)
			\le N\epsilon_m\sqrt{5\rme^2-4\rme}.\label{eq:Condition4}
		\end{align}
Equations~\eqref{eq:Condition3} and ~\eqref{eq:Condition4} give us
	\begin{equation}
		\sigma(\varepsilon) \le \varepsilon_t
	\end{equation}
\end{proof}

As an illustration of the choice of machine epsilon, we consider the problem of approximating real and imaginary time evolution of the Hubbard model~\cite{Hubbard1964}. The Hubbard model Hamiltonian is given by  
\begin{equation}
 A_H = -t_H \sum_{\substack{\langle i,j \rangle,\\\sigma \in \{\uparrow,\downarrow\}}}( c^{\dagger}_{i,\sigma} c^{}_{j,\sigma}+ c^{\dagger}_{j,\sigma} c^{}_{i,\sigma}) + U_H \sum_{i=1}^{N} n_{i\uparrow} n_{i\downarrow}\label{eq:Hubbard}
\end{equation}
where $i$ and $j$ are labels of sites in an $\eta \times \eta$ square lattice and $\langle i,j \rangle$ represents nearest neighbor interaction. The Fermions at each site are defined by annihilation and creation operators $c_{k,\sigma}$ and $c^\dagger_{k,\sigma}$. The Hamiltonian~\eqref{eq:Hubbard} models interactions between electrons and demonstrates insulating, magnetic and superconducting effects in a solid. The model cannot be solved exactly for more than one dimension~\cite{Lieb1968} and numerical studies disagree~\cite{Hirsch1985} on the properties of the model.  

Each of the $\eta^2$ term in the nearest neighbor summation in the Hamiltonian can be represented by a $4\times 4$ matrix
\begin{equation}
A = 
\begin{pmatrix} 
0 & 0 & -t_H & -t_H	\\
0 & 0 & t_H &  t_H	\\
-t_H & t_H & U_H & 0 \\
-t_H & t_H & 0 & U_H 
\end{pmatrix}
\end{equation}
in a basis of states $\left\{| \uparrow ,\downarrow \rangle,
| \downarrow,\uparrow  \rangle,
| \uparrow \downarrow,\bullet \rangle,| \bullet, \uparrow \downarrow \rangle \right\}
$ defined over pairs of adjacent sites on the lattice. By suitable choice~\cite{Berry2006} of the TSD order $k$, we minimize the upper bound of the computational cost of simulation 
\begin{equation}
	N_{\text{exp}} = 2 \eta^4 \tau \text{e}^{2\sqrt{\ln 5 \ln (m\tau/\varepsilon_t)}}
\end{equation}
for real or imaginary simulation time $t$, $\tau = |t|\sqrt{8t_H^2+2U_H^2}$ and desired error tolerance $\varepsilon_t$. Using Eq.~\eqref{eq:Condition2}, we infer that any value $\varepsilon_m$ such that 
\begin{align}
	\epsilon_m \le \frac{\varepsilon_t\text{e}^{-2\sqrt{\ln 5 \ln (m\tau/\varepsilon_t)}}}{4 \eta^4 \tau \sqrt{(5\rme^2-4\rme)}}
\end{align}
suffices to ensure that the machine error is less than the desired error tolerance~$\varepsilon_t$.

Corollary~\ref{cor:Classical} could guide the choice of the floating-point precision for classical computation or control precision in experimental implementation depending upon the desired error tolerance. In the case of quantum computation, the norm-preserving nature of unitary operations trivially ensures condition~\eqref{eq:Condition}. We bound the error in the quantum computation of the TSD in the following corollary.
\begin{cor}
\label{cor:Quantum}
	If each $A_i$ is Hermitian and if we employ unitary operations to effect the required TSD operator exponentials, then	
	\begin{equation}
		\sigma(\varepsilon)\le  N \epsilon_m\sqrt{\ell}.
	\end{equation}
\end{cor}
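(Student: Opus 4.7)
The plan is to revisit the telescoping-sum bound produced by the triangle inequality in the proof of Theorem~\ref{thm:normalize} and specialize it to the unitary setting, where the norm ratios become deterministic. First, I would start from the intermediate inequality
\[
\varepsilon \le \sum_{p=1}^N \frac{\|U_p - \tilde{U}_p\|}{\|U_p\|} \prod_{q=1}^{p-1} \frac{\|\tilde{U}_q\|}{\|U_q\|}.
\]
Since each $A_i$ is Hermitian and we effect the exponentials as unitary operations, both $U_p$ and $\tilde{U}_p$ are unitary, so their spectral norms are identically $1$. Every ratio $\|\tilde{U}_q\|/\|U_q\|$ therefore collapses to $1$ and the bound simplifies to $\varepsilon \le \sum_{p=1}^N \|U_p - \tilde{U}_p\|$, which bypasses the log-normal machinery that controlled the product $Y=\prod_p \|\tilde{U}_p\|/\|U_p\|$ in the proof of Theorem~\ref{thm:normalize} and is responsible for the larger constant $\sqrt{5\rme^2-4\rme}$ there.

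Next, I would bound the standard deviation of the resulting sum using Minkowski's inequality in $L^2$, which yields the triangle inequality $\sigma(\varepsilon) \le \sum_{p=1}^N \sigma(\|U_p - \tilde{U}_p\|)$. Applying Lemma~\ref{lem:normEpsilon} then controls each summand by $\epsilon_m\sqrt{\ell}$, with the normalizing factor $\|U_p\|$ of the lemma being $1$ in the unitary case. Summing $N$ such contributions yields the claimed bound $\sigma(\varepsilon) \le N\epsilon_m\sqrt{\ell}$ directly.

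Because the argument simply specializes results already established in the excerpt, I do not expect a substantive technical obstacle. The only conceptual point worth flagging is the caveat already recorded in Section~\ref{sec:ErrorModel}: the independent-Gaussian error model on matrix elements is not strictly compatible with exact unitarity of $\tilde{U}_p$, so the corollary should be interpreted within the same approximation that the paper has already stated for the quantum-computing setting.
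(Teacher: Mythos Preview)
Your proposal is correct and takes essentially the same route as the paper: both specialize the telescoping bound from the proof of Theorem~\ref{thm:normalize} to the unitary case, where $\|\tilde{U}_p\|/\|U_p\|=1$ deterministically, and then invoke Lemma~\ref{lem:normEpsilon} for each term. The only cosmetic difference is that the paper keeps the product--variance identity~\eqref{eq:identity} and drops its first and third terms (since $\sigma^2(Y)=0$) before substituting back into~\eqref{eq:varepsilonLeq}, whereas you collapse the product $Y$ to $1$ first and then apply Minkowski to the remaining sum; both arrive at $\sigma(\varepsilon)\le N\epsilon_m\sqrt{\ell}$.
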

\begin{proof}
In the case of unitary $U_p$ and $\tilde{U}_p$, 
		\begin{align}	
			\frac{\left\|\tilde{U}_p\right\|}{\left\|U_p\right\|} &= 1,
\end{align}
which gives us
\begin{align}
			\sigma\left(\frac{\left\|\tilde{U}_p\right\|}{\left\|U_p\right\|}\right) &= 0.		
			\end{align}
Hence, the condition~\eqref{eq:Condition} trivially holds. Additionally, the factor
\begin{equation}
 \sigma^2\left(\prod_{p=1}^N\frac{\left\| {\tilde{U}_p}\right\|}{\left\|{U_p} \right\|}\right) = 0,
\end{equation}
 which occurs in the first and third terms on the right-hand side of inequality~\eqref{eq:identity}, is zero. Consequently, we have
\begin{align}
\sigma^2\left(\frac{\left\|U_p-\tilde{U}_p\right\|}{\left\|U_p\right\|}\frac{\left\|\tilde{U}_{N-1}\dots\tilde{U}_{2}\tilde{U}_{1}\right\|}{\left\|U_{N-1}\dots U_{2}U_1\right\|}\right) \le
 \sigma^2\left(\frac{\left\|U_p-\tilde{U}_p\right\|}{\left\|U_p\right\|}\right)E^2\left(\frac{\left\|\tilde{U}_{N-1}\dots\tilde{U}_{2}\tilde{U}_{1}\right\|}{\left\|U_{N-1}\dots U_{2}U_1\right\|}\right).\label{eq:QCondition}
\end{align}
Substituting the expressions of $\sigma(\varepsilon)$~\eqref{eq:varepsilonLeq} and $\left.\sigma\left({\left\|U_p-\tilde{U}_p\right\|}\right/{\left\|U_p\right\|}\right)$~\eqref{eq:MaxError} in~\eqref{eq:QCondition}, we obtain 
	\begin{equation}\label{eq:QC}
		\sigma(\varepsilon)\le  N \epsilon_m\sqrt{\ell},
	\end{equation}
	which is linear in $N$.
\end{proof}

To summarize, we have made reasonable simplifying assumptions on the error model in the TSD. Under these assumptions, a straightforward implementation of the TSD is unstable but stability of the TSD is restored by imposing sufficiency conditions on the norm of the TSD operators. In order to attain desired error tolerance of decomposition, we provide sufficiency conditions on the floating-point precision in classical computation and control precision in experimental implementations. We have also shown that the quantum computation of the TSD has a linear (in $N$) upper bound because of the unitarity of quantum gates.

%%%%%%%%%%%%%%%%%%%%%%%%%%%%%%
%%%%%%%%%%%%%%%%%%%%%%%%%%%%%%

\section{Conclusion}
The TSD is widely used in algorithms for control and simulation of physical systems, but its stability under machine error is not proved. Here we have used a simple but reasonable model for treating errors based on the assumption of independent errors on unitary-matrix elements. This independent-error assumption allows us to conduct a rigorous mathematical analysis but does not hold in practice, either computationally or experimentally. Performing the stability analysis without the simplifying independent-error assumption is a topic for future research.

We have shown that a straightforward implementation of the TSD is unstable; i.e., the lower bound of standard error in the decomposition diverges exponentially with the number $N$ of operator exponentials in the decomposition. This exponential divergence might have, in the past, encouraged the use of lower-order product formul\ae~like the Trotter decomposition and the first order Baker-Cambell-Hausdorff formul\ae, which are sub-optimal.

We have shown that the TSD can be stable under a sufficiency condition on the norm of the TSD operators. This sufficiency condition can be satisfied by implementing normalization subroutines in classical computation, post-selection in experimental implementation. The condition is trivially satisfied in the case of quantum computation. Hence, quantum algorithms for Hamiltonian simulation as well as linear and differential equation solvers implemented even on early quantum computers without fault tolerance shall be stable against machine error. Finally, we provided sufficiency conditions on the precision of computation or control for the machine error to be bounded by a specified error tolerance.
%%%%%%%%%%%%%%%%%%%%%%%%%%%%%%
%%%%%%%%%%%%%%%%%%%%%%%%%%%%%%

\section*{Acknowledgments}
The authors acknowledge AITF, China Thousand Talent Program, CIFAR and USARO for funding. We thank Dominic W.~Berry, Andrew M.~Childs, Richard Cleve  and Nathan Wiebe for useful discussions.

\section*{References}
\bibliography{JPhysA} 

\providecommand{\newblock}{}
\begin{thebibliography}{10}
\expandafter\ifx\csname url\endcsname\relax
  \def\url#1{{\tt #1}}\fi
\expandafter\ifx\csname urlprefix\endcsname\relax\def\urlprefix{URL }\fi
\providecommand{\eprint}[2][]{\url{#2}}
% Bibliography created with iopart-num v2.1
% /biblio/bibtex/contrib/iopart-num

\bibitem{Horn1985}
Horn R~A and Johnson C~R 1986 {\em {Matrix Analysis}\/} (Cambridge University
  Press)

\bibitem{Bandrauk1991}
Bandrauk A~D and Shen H 1991 {\em Chem.~Phys.~Lett.\/} {\bf 176} 428--432

\bibitem{Bandrauk1993}
Bandrauk A~D and Shen H 1993 {\em J.~Chem.~Phys.\/} {\bf 99} 1185

\bibitem{Vidal2004}
Vidal G 2004 {\em Phys.~Rev.~Lett.\/} {\bf 93} 40502

\bibitem{Couairon2007}
Couairon A and Mysyrowicz A 2007 {\em Phys.~Rep.\/} {\bf 441} 47--189

\bibitem{Jordan2007}
Jordan J, Or\'{u}s R, Vidal G, Verstraete F and Cirac J~I 2008 {\em
  Phys.~Rev.~Lett.\/} {\bf 101} 250602

\bibitem{Vidal2007}
Vidal G 2007 {\em Phys.~Rev.~Lett.\/} {\bf 98} 70201

\bibitem{Verstraete2008}
Verstraete F, Murg V and Cirac J~I 2008 {\em Adv.~Phys.\/} {\bf 57} 143--224

\bibitem{Banuls2009}
Banuls M~C, Hastings M~B, Verstraete F and Cirac J~I 2009 {\em
  Phys.~Rev.~Lett.\/} {\bf 102} 240603

\bibitem{Zhao2010}
Zhao J~H, Wang H~L, Li B and Zhou H~Q 2010 {\em Phys.~Rev.~E\/} {\bf 82} 61127

\bibitem{abrams1997}
Abrams D~S and Lloyd S 1997 {\em Phys.~Rev.~Lett.\/} {\bf 79} 2586

\bibitem{Lloyd1996a}
Lloyd S 1996 {\em Science\/} {\bf 273} 1073--1078

\bibitem{Harrow2009c}
Harrow A~W, Hassidim A and Lloyd S 2009 {\em Phys.~Rev.~Lett.\/} {\bf 103}
  150502 ISSN 1079-7114

\bibitem{whitfield2011}
Whitfield J~D, Biamonte J and Aspuru-Guzik A 2011 {\em Mol.~Phys.\/} {\bf 109}
  735--750

\bibitem{Jordan2012a}
Jordan S~P, Lee K~S and Preskill J 2012 {\em Science\/} {\bf 336} 1130--1133

\bibitem{Schirmer1999}
Schirmer S, Girardeau M and Leahy J 1999 {\em Phys.~Rev.~A\/} {\bf 61} 12101

\bibitem{Berry2006}
Berry D~W, Ahokas G, Cleve R and Sanders B~C 2006 {\em Commun.~Math.~Phys.\/}
  {\bf 270} 359--371

\bibitem{papageorgiou2012efficiency}
Papageorgiou A and Zhang C 2012 {\em Quantum Inf.~Process.\/} {\bf 11} 541--561

\bibitem{Blanes2009}
Blanes S, Casas F, Oteo J and Ros J 2009 {\em Phys. Rep.\/} {\bf 470} 151--238
  ISSN 03701573

\bibitem{Gilmore1974}
Gilmore R 1974 {\em J. Math. Phys.\/} {\bf 15} 2090 ISSN 00222488

\bibitem{Wilcox1967}
Wilcox R~M 1967 {\em J. Math. Phys.\/} {\bf 8} 962 ISSN 00222488

\bibitem{Magnus1954}
Magnus W 1954 {\em Communications on pure and applied mathematics\/} {\bf 7}
  649--673

\bibitem{Richtmyer1965}
Richtmyer R and Greenspan S 1965 {\em Communications on Pure and Applied
  Mathematics\/} {\bf 18} 107--108

\bibitem{Berry2013}
Berry D~W, Childs A~M, Cleve R, Kothari R and Somma R~D 2013 {\em arXiv
  preprint arXiv:1312.1414\/}

\bibitem{Berry2014}
Berry D~W, Cleve R and Gharibian S 2014 {\em Quantum Info. Comput.\/} {\bf 14}
  1--30

\bibitem{Trotter1959}
Trotter H~F 1959 {\em Proc.~Am.~Math.~Soc.\/} {\bf 10} 545--551

\bibitem{Suzuki1976}
Suzuki M 1976 {\em Commun.~Math.~Phys.\/} {\bf 51} 183--190

\bibitem{Suzuki1976a}
Suzuki M 1976 {\em Progr.~Theor.~Phys\/} {\bf 56} 1454--1469

\bibitem{Suzuki1990}
Suzuki M 1990 {\em Phys.~Lett.~A\/} {\bf 146} 319--323

\bibitem{Suzuki1991}
Suzuki M 1991 {\em J.~Math.~Phys.\/} {\bf 32} 400

\bibitem{Suzuki1992}
Suzuki M 1992 {\em Phys.~Lett.~A\/} {\bf 165} 387 -- 395

\bibitem{Suzuki1992a}
Suzuki M 1992 {\em Phys.~A Stat.~Mech.~its Appl.\/} {\bf 191} 501--515

\bibitem{Suzuki1993}
Suzuki M and Yamauchi T 1993 {\em J.~Math.~Phys.\/} {\bf 34} 4892

\bibitem{Suzuki1994}
Suzuki M 1994 {\em Commun.~Math.~Phys.\/} {\bf 163} 491--508

\bibitem{Bandrauk2013}
Bandrauk A~D and Lu H 2013 {\em JJ.~Chem.~Theory Comput.\/} {\bf 12} 1340001

\bibitem{Suzuki1995}
Suzuki M 1995 {\em Physics Letters A\/} {\bf 201} 425 -- 428

\bibitem{Tao2012}
Tao T 2012 {\em Topics in random matrix theory\/} vol 132 (American
  Mathematical Soc.)

\bibitem{Kitaev2002classical}
Kitaev A~Y, Shen A~H and Vyalyi M~N 2002 {\em {Classical and Quantum
  Computation}\/} (AMS Press)

\bibitem{Harrow2002}
Harrow A~W, Recht B and Chuang I~L 2002 {\em J.~Math.~Phys.\/} {\bf 43} 4445
  ISSN 00222488

\bibitem{Dawson2005}
Dawson C~M and Nielsen M~A 2006 {\em Quantum Inf.~Comput.\/} {\bf 6} 81--95

\bibitem{Kliuchnikov2013}
Kliuchnikov V, Maslov D and Mosca M 2013 {\em Quantum Info.~Comput.\/} {\bf 13}
  607--630

\bibitem{Johnson1997}
Johnson N~L, Kotz S and Balakrishnan N 1997 {\em {Continuous Univariate
  Distributions}\/} vol~1 (John Wiley \& Sons)

\bibitem{Leone1961}
Leone F~C, Nelson L~S and Nottingham R~B 1961 {\em Technometrics\/} {\bf 3}
  543--550

\bibitem{Bai1999}
Bai Z 1999 {\em Stat.~Sin.\/} {\bf 9} 611--677

\bibitem{Ku1966}
Ku H~H 1966 {\em J.~Res.~Natl.~Bur.~Stand.~- C\/} {\bf 70C} 75--79

\bibitem{Hubbard1964}
Hubbard J 1964 {\em Proc. R. Soc. A Math. Phys. Eng. Sci.\/} {\bf 281} 401--419
  ISSN 1364-5021

\bibitem{Lieb1968}
Lieb E and Wu F 1968 {\em Phys. Rev. Lett.\/} {\bf 20} 1445--1448

\bibitem{Hirsch1985}
Hirsch J 1985 {\em Phys. Rev. B\/} {\bf 31} 4403--4419 ISSN 0163-1829

\end{thebibliography}

%\appendix
%\section{The distribution of norm of random matrices is a Gaussian}
%\label{App:gaussianNorm}
\end{document}